\newtheorem{problem}{Problem}
\newtheorem{theorem}{Theorem}
\title{\LARGE \bf GMM-Based Time-Varying Coverage Control}
\author{Behzad Zamani\thanks{B. Zamani, A. Chapman, P. Dower and C. Manzie are with the University of Melbourne. J. Kennedy and S. Crase are with the Defence Science and Technology (DST) Group.}\thanks{This work was supported by the Defence Science and
Technology Group, Australia, via Centre for Advanced Defence Research in Robotics and Autonomous
Systems research agreement UA216424-S25 10725.}, James Kennedy, Airlie Chapman, Peter Dower, Chris Manzie, Simon Crase}
\date{}
\begin{document}

\maketitle
\begin{abstract}
    In coverage control problems that involve time-varying density functions, the coverage control law depends on spatial integrals of the time evolution of the density function. The latter is often neglected, replaced with an upper bound or calculated as a numerical approximation of the spatial integrals involved. In this paper, we consider a special case of time-varying density functions modeled as Gaussian Mixture Models (GMMs) that evolve with time via a set of time-varying sources (with known corresponding velocities). By imposing this structure, we obtain an efficient time-varying coverage controller that fully incorporates the time evolution of the density function. We show that the induced trajectories under our control law minimise the overall coverage cost. We elicit the structure of the proposed controller and compare it with a classical time-varying coverage controller, against which we benchmark the coverage performance in simulation. Furthermore, we highlight that the computationally efficient and distributed nature of the proposed control law makes it ideal for multi-vehicle robotic applications involving time-varying coverage control problems. We employ our method in plume monitoring using a swarm of drones. In an experimental field trial we show that drones guided by the proposed controller are able to track a simulated time-varying chemical plume in a distributed manner. 
\end{abstract}
\section{Introduction}
Coverage control~\cite{cortes2004coverage_nonuniform} is a well-established distributed control technique for multi-agent robotic systems. Its objective is to distributedly deploy agents within a bounded region according to a high-level density function. The latter function is a scalar field that indicates the relative importance of different locations within the coverage region. Originally adopted from signal processing~\cite{lloyd1982least} and operations research~\cite{okabe1997locational}, coverage control has found many robotic applications including mobile sensor networks~\cite{cortes2004coverage_nonuniform}, multi-robot persistent surveillance~\cite{persistentsurveillance} and formation control~\cite{diaz2015distributed}.

 Cortes et al.~\cite{cortes2005coordination_Uniform, cortes2004coverage_nonuniform , cortes2002coverage_dynamic} were among the first to apply the classical Lloyd algorithm~\cite{lloyd1982least}, from signal processing, to the multi-robot coordination problem. In the algorithm, robots iteratively 
 tessellate the coverage space into Voronoi partitions and simultaneously track the centroid of these partitions. The tessellation and hence the control algorithm is distributed, as agents only require position information from neighbouring agents to calculate their control action.
 The uniform coverage problem was introduced in~\cite{cortes2005coordination_Uniform}. This was extended in~\cite{cortes2004coverage_nonuniform} to non-uniform coverage problems where a density function biases the distribution of the agents in the environment, spatially close to the density field. A stability proof was provided demonstrating the  asymptotic convergence of the network of robots to the centroids of the local minima of their joint coverage cost. The rate of convergence of this controller was characterised further  in~\cite{kennedy2021exponential} where it was shown that under stronger local convexity assumptions, the convergence rate can be exponential.  
 
  In~\cite{cortes2002coverage_dynamic}, the authors offered an extension to this control algorithm for a time-varying density function. However, as pointed out, e.g., in~\cite{lee2015multirobot}, this algorithm relies on a restrictive invariance condition, namely, that a spatial integral dependent on the time evolution of the density is negligible. In~\cite{kennedy2019generalized}, this assumption was relaxed and a generalised time-varying density coverage controller was introduced that relies on a worst-case bound on the rate of change of the density function. This controller can stabilise the agents to a ball around their respective centroids. A slight modification of this controller was later proposed in~\cite{abdulghafoor2021distributed}, where the same assumption was utilised to guarantee asymptotic convergence.

 Alternative coverage control techniques rely on particular time-varying density functions rather than an assumption on their rate of change. In~\cite{miah2016non}, a class of time-varying density functions are considered that obey mass conservation under diffusion. This work was extended to convection-diffusion modeled time-varying density functions in~\cite{mei2019distributed}. In~\cite{miah2017generalized}, a number of time-varying moving targets act as means of Gaussian functions, summation of which forms the density function. Asymptotic stability certificates are presented in~\cite{mei2019distributed,miah2017generalized} via bounding the associated density function derivatives.  Gaussian mixture modeling of time-varying density functions is also considered in~\cite{abdulghafoor2021distributed}. Here, initial and final Gaussian mixture models are assumed to be known and the instantaneous density is analytically computed by interpolating between the two. A coverage controller is then employed to track the instantaneous density function while also guaranteeing obstacle avoidance via modification of the Voronoi regions. The boundedness results critical to the convergence results of all these papers are reminiscent of the boundedness assumption proposed in~\cite{kennedy2019generalized}. That is, the particular density models proposed in these papers are special cases where the boundedness assumption in~\cite{kennedy2019generalized} can be verified. None of these methods provide coverage control laws that directly rely on the velocity or internal model of the density function considered. A similar idea was behind the control law  in~\cite{teruel2019distributed}, which aimed to match the velocity of a moving coverage region but not the time-varying elements of the density function itself. While most of the mentioned methods use first order derivatives of the coverage cost to derive their control law, dynamic coverage controllers  in~\cite{lee2015multirobot, diaz2015distributed} rely on second order information. These methods rely on approximations of the second-order information in order to maintain the distributed nature of their coverage control law. 

In this paper, we consider a special case of time-varying density functions modeled as Gaussian Mixture Models (GMMs) that evolve in time via a set of time-varying sources with known corresponding velocities. By imposing this structure, we provide a time-varying coverage controller that does not require further assumptions on the time-evolution of the density function and rather fully incorporates the kinematics of the density function directly. We show that this controller asymptotically minimises the coverage objective function. The GMM structure minimises the numerical spatial integrals required for the proposed controller - reducing computation.

The paper is structured as follows. Section~\ref{sec:prob} introduces the distributed multi-agent coverage control problem. Here, we also provide details of the GMM modeling of the density function and the collective coverage objective function. In Section~\ref{sec:main-results}, we present the proposed control law which we formally prove guides the agents along trajectories that minimise their collective coverage cost.  Section~\ref{sec:sims} provides a simulation study that benchmarks the performance of this control law against two well-known coverage control strategies from~\cite{cortes2004coverage_nonuniform} and~\cite{cortes2002coverage_dynamic}. Additionally, we discuss implementation issues and describe a field trial of these results. Section~\ref{sec:conc} concludes the paper.  
\section{Problem formulation}\label{sec:prob}
Consider a group of $m$ agents, positioned at $p_i\in \mathbb{R}^2$ for $i\in\mathcal{I}\triangleq\{1,\cdots,m\}$, that provides coverage of a bounded convex set $\Omega\subset\mathbb{R}^2$. Denote $P\triangleq \{p_i\;|\;i\in\mathcal{I}\}$. We are interested in the classical Voronoi tessellation~\cite{du1999centroidal} partitioning $\Omega$ into a set of polygon regions $V_i(P)$ defined as\vspace{-.25cm}
\begin{equation}\label{eq:tess}
    V_i(P) \triangleq \{q\in \Omega\;:\;\Vert q - p_i\Vert\leq \Vert q - p_j\Vert,\;  i,j\in\mathcal{I}\}.
\end{equation}
The Voronoi tessellation $V(P)\triangleq \{V_i(P)\;|\;i\in\mathcal{I}\}$ is calculated by the agents in a \emph{distributed} manner. That is, each agent $i$ calculates its own partition $V_i$ only using position information of itself and its neighbours $p_j$, $j\in\mathcal{N}_i$. The neighbour set is defined as\vspace{-2.5mm}
\begin{equation}
\mathcal{N}_i\triangleq \{j\in\mathcal{I}\; :\;  j\not=i,\; V_j(P)\cap V_i(P)\not=\emptyset \}.
\end{equation}
Note that this definition is symmetric (but not transitive), i.e.,  $j\in\mathcal{N}_i\Longleftrightarrow i\in\mathcal{N}_j$.
 The boundary between neighbouring Voronoi regions $i$ and $j$ is a set of points where 
 \begin{equation}\label{eq:boundary}
    \partial V_{ij}(P) \triangleq \{q\in V_i(P)\;:\;\Vert q - p_i\Vert= \Vert q - p_j\Vert,\; j\in \mathcal{N}_i\}.
 \end{equation}
 This is a line segment perpendicular to $p_j-p_i$ passing through the midpoint $\frac{p_j+p_i}{2}$.\vspace{-2.5mm}
 \begin{equation}\label{eq:boundary2}
     \partial V_{ij}(P) = \left\{q\in V_i:\left(q-\frac{p_j+p_i}{2}\right)^{\top}(p_j-p_i)=0\right\}. 
 \end{equation}
 We denote by $\hat{n}_{ij}$, the outward normal direction vector to $\partial V_{ij}$, that is,\vspace{-.25cm}
\begin{equation}\label{eq:normal}
    \hat{n}_{ij} \triangleq \frac{p_j -p_i}{\Vert p_j -p_i\Vert},\; \forall \; j\in\mathcal{N}_i.
\end{equation}
 The overall boundary of partition $V_i$ is the union set of its neighbouring boundaries $\{\partial V_{ij}\}$ and the intersection of $V_i$ with the coverage region boundary $\partial \Omega$,
 \begin{equation}
     \partial V_i(P)\triangleq  \bigcup_{j\in\mathcal{N}_i} \partial V_{ij}(P) \cup (V_i(P) \cap \partial\Omega).
 \end{equation}

 In many coverage control problems, agents are to cover the coverage region in a non-uniform manner. Let $\phi:$ $\mathbb{R}^2\times \mathbb{R}$ $\longrightarrow \mathbb{R}^+$ denote a time-varying density function that is given to all agents. This function indicates the relative importance of the coverage task at any particular location $q\in\Omega$ and time $t$. We  introduce a number of auxiliary variables that appear in many coverage control algorithms.    

Let $m_i(t)\in\mathbb{R}$ and $c_i(t)\in\mathbb{R}^2$ denote the mass and centroid of partition $V_i$ distributed according to $\phi(q,t)$, defined by\footnote{When clear from the context and for brevity we suppress state and time dependence. Nevertheless, partitions $V_i(P)$ depend on time as they depend on agent positions which change in time. Thus, mass and centroid also depend on state $P$ and time as they are defined with respect to $V_i(P)$ and $\phi(q,t)$.}
\begin{equation}\label{eq:masscentroid}
    m_i\triangleq \int_{V_i} \phi(q,t) dq, \;c_i\triangleq \frac{1}{m_i}\int_{V_i} q\phi(q,t) dq.
\end{equation}
The partial derivative of these entities with respect to time is     
\begin{equation}\label{eq:partialmasscentroid}
\begin{split}
    &\frac{\partial m_i}{\partial t} =\int_{V_i} \frac{\partial \phi}{\partial t}(q,t) dq,\\ 
    &\frac{\partial c_i}{\partial t} =\frac{1}{m_i}\bigg(\int_{V_i} q\frac{\partial \phi}{\partial t}(q,t) dq - \frac{\partial m_i}{\partial t}c_i\bigg).
\end{split}
\end{equation}
In both equations, we used the fact that the partial time derivative of the partition $V_i$ is zero as it is only indirectly dependent on time through $p_i(t)$ and $\{p_j(t)\}_{j\in\mathcal{N}_i}$. The second equation is obtained by applying the chain rule to $\frac{\partial (c_im_i)}{\partial t}$.

\subsection{Time-Varying Gaussian Mixture Modeled Density}\label{sec:GMM}
 We model the density function using a Gaussian Mixture Model (GMM). It is well established that  GMMs can approximate a continuous nonlinear function arbitrarily well given a sufficient number of radial basis functions~\cite{poggio1989theory}. The mathematical structure of GMMs is intuitive and facilitates human-driven design. For instance, one can design the means, standard deviations, and weights of a GMM to generate various swarm formations~\cite{diaz2015distributed}. GMMs have also been successfully deployed in the context of estimation problems~\cite{schwager2009decentralized, leong2022field, leong2022logistic} where the density function is estimated or learned as a GMM structure. 

Consider the following GMM-structured density function $\phi$ with $K\in\mathbb{R}^+$ Gaussian density functions each with a known time-varying mean $s_k(t)\in\mathbb{R}^2$, time-invariant mixture weight $a_k\in\mathbb{R}$ and standard deviation $\sigma_k\in\mathbb{R}^+$, where $k\in\{1,\cdots,K\}$. The means $s_k(t)$ are known for all $t$, that is, their time evolution is predetermined with initial values $s_{k,0}\in\mathbb{R}^2$ and their velocities $w_k(t)\in\mathbb{R}^2$ for all $t$.
\begin{equation}\label{eq:GMM}
    \begin{split}
        & \phi(q,t)\triangleq \sum_{k=1}^K\phi_k(q,t),\\
        & \phi_k(q,t) \triangleq  a_k\exp\left({\frac{-\Vert q - s_k(t)\Vert ^2}{2\sigma_k^2}}\right),\\
        & \dot{s}_k(t) = w_k(t),\;s_k(0)=s_{k,0}.
    \end{split}
\end{equation}
By applying the chain rule, differentiation of~\eqref{eq:GMM} yields  
\begin{equation}\label{eq:grad-dot}
    \begin{split}
        & \frac{\partial{\phi}}{\partial t}(q,t) =  \sum_{k=1}^K\frac{w_k(t)^{\top}( q - s_k(t))}{\sigma_k^2}\phi_k(q,t),\\
        & \nabla \phi_k(q,t) = \frac{ -(q - s_k(t))}{\sigma_k^2}\phi_k(q,t).\\
    \end{split}
\end{equation}    
Substituting one into the other yields the partial differential equation
    \begin{equation}\label{eq:PDE}
        \frac{\partial{\phi}}{\partial t}(q,t) = - \sum_{k=1}^K w_k(t)^{\top}\nabla \phi_k(q,t).
    \end{equation}
The mass $m_{ik}$ and centroid $c_{ik}$ of the $k$th Gaussian component are defined as 
\begin{equation}\label{eq:componentmasscenter}
    \begin{split}
        &m_i= \sum_{k=1}^K m_{ik}, \;m_{ik} \triangleq \int_{V_i} \phi_k(q,t) dq,\\
        &c_i=\frac{\sum_{k=1}^K m_{ik}c_{ik}}{\sum_{k=1}^K m_{ik}}, \; c_{ik} \triangleq \frac{1}{m_{ik}}\int_{V_i} q\phi_k(q,t) dq.
    \end{split}
\end{equation}

\subsection{Coverage Control Problem}
The collective coverage objective is given by 
\begin{equation}\label{eq:macost}
   H(P,t) \triangleq \frac{1}{2}\int_{\Omega} \min_i \Vert q - p_i\Vert ^2\phi(q,t) dq.
\end{equation}
Minimising this cost function with respect to agent positions $P$ leads to dispersing the agents over the coverage region $\Omega$ such that the integrated distance of points $q\in\Omega$ with respect to their closest agent, weighted by the density value of $q$, is minimised. 

By~\eqref{eq:tess}, the cost~\eqref{eq:macost} can be equivalently expressed as a finite summation over the Voronoi tessellation, as shown in~\cite{du1999centroidal}
\begin{equation}\label{eq:macoste}
   H_V(P,t) = \frac{1}{2}\sum_i \int_{V_i}\Vert q - p_i\Vert ^2\phi(q,t) dq.
\end{equation}
This formulation is amenable to a distributed coverage control problem,  
 as the collective coverage cost is distributed as a summation of costs, each associated with a particular agent. Before we formally introduce the coverage control problem, we define agent dynamics
\begin{equation}\label{eq:dist_cont}
    \dot{p}_i = f(p_i, \{p_j\}_{ j\in\mathcal{N}_i},t),
\end{equation}
where $f(p_i, \{p_j\}_{ j\in\mathcal{N}_i},t)$ denotes a distributed control law. 
A control law is distributed if it only relies on information locally available to each agent. We also assume that the coverage region $\Omega$ and the density function $\phi$ are available to all agents whereas each agent has access only to its own position and the positions of its neighbors.  
\begin{problem}[Dynamic Coverage]\label{prob:1}
    Consider a time-varying density function with the GMM structure~\eqref{eq:GMM} that is fully specified with all of its static and time-varying parameters. Find a distributed control strategy of the form~\eqref{eq:dist_cont} that minimises the collective coverage cost~\eqref{eq:macost}.
\end{problem}

\section{Main Results}\label{sec:main-results}
In this section, we propose a control law that solves Problem~\ref{prob:1}. The proposed controller is obtained by exploiting the GMM modeling introduced in~\ref{sec:GMM}. It incorporates structural elements of the GMM introduced in~\eqref{eq:GMM}, in particular, the Gaussian source velocities $w_k$. 

Consider the control law 
\begin{equation}\label{eq:full-controller}
\begin{split}
    &f(p_i, \{p_j\}_{ j\in\mathcal{N}_i}) \triangleq \frac{\sum_k m_{ik}w_k}{m_{i}} \\&\quad\quad -\frac{1}{2}\left( \beta - \frac{F_{i}}{m_{i}\Vert p_i-c_{i}\Vert ^2}\right)(p_i-c_{i})  ,\\
    & F_{i}\triangleq  \sum_{k=0}^K 2m_{ik}w_k^{\top}(c_i - c_{ik}) \\
    &\quad\quad  +\sum_{k=0}^K \int_{\partial V_i\cap \partial \Omega} \Vert q - p_i\Vert ^2 w_k^{\top}\hat{n}_i\phi_k(q,t)dq .
    \end{split}
\end{equation}
The first term is a weighted average of Gaussian component velocities $w_k$. Each $w_k$ is scaled according to the proportion of the total mass in partition $V_i$ that originates from the Gaussian component $k$. This prompts the velocity of agent $i$ to partly match the average Gaussian source velocities, weighted more towards Gaussian components that have a larger mass inside its partition $V_i$. The second part involves a proportional feedback term (with feedback gain $\beta>0$) that guides the agent towards its partition centroid $c_i$. The scalar $F_{i}$ is a gain correction term that compensates further for dynamics of the density function. The vector $\hat{n}_i$ is a unit norm vector perpendicular to the boundary line segments in $\partial V_i\cap \partial \Omega$.

 Importantly, the integral term in $F_{i}$ only needs to be computed if the partition boundary $\partial V_i$ has overlap with the boundary of the overall coverage region $\partial \Omega$.\footnote{If we further assume the GMM function vanishes over $\partial \Omega$, this integral term is always zero. In many applications, the density function is designed to shape the behaviour of agents in a particular way. Therefore, by design, this assumption can be enforced.} In the worst case scenario that this is not possible, this term requires computations of a few boundary integrals. This is in contrast with the spatial integrals that would be required otherwise.

 Next we contrast this controller with the one proposed in~\cite{cortes2002coverage_dynamic}.\vspace{-.25cm}
\begin{equation}\label{eq:dynamic-cortes}
\begin{split}
    &f_{dl}(p_i, \{p_j\}_{ j\in\mathcal{N}_i}) = \frac{\partial c_i}{\partial t}  -\frac{1}{2}\left( \frac{1}{m_i}\frac{\partial m_i}{\partial t} + \beta\right)(p_i-c_{i}).
    \end{split}
\end{equation}

Both controllers are fully distributed and include $\beta>0$ as a proportional centroid-tracking gain. In this controller however, the time-varying aspects of the density function are only incorporated, approximately and implicitly, via the auxiliary variables $\frac{\partial m_i}{\partial t}$ and $\frac{\partial c_i}{\partial t}$ which were introduced in~\eqref{eq:partialmasscentroid}.

The following theorem states that the same cost decrease rate in~\cite{cortes2004coverage_nonuniform} and~\cite{cortes2002coverage_dynamic} can be shown for control law~\eqref{eq:full-controller}.
\begin{theorem}\label{th:decrease}
    For $\beta > 0$,  under the GMM assumptions~\eqref{eq:GMM},  control law~\eqref{eq:full-controller}   
solves Problem~\ref{prob:1} by asymptotically minimising the collective coverage cost~\eqref{eq:macoste}. Moreover, the closed-loop trajectories of~\eqref{eq:dist_cont}, under control law~\eqref{eq:full-controller} result in the same cost decrease condition shown in~\cite{cortes2004coverage_nonuniform}, namely \vspace{-.4cm}
\begin{equation}\label{eq:dec_cond}
  \frac{d}{dt} H_V(P,t) =  -\frac{\beta m_i}{2}\sum_i  \Vert p_i-c_{i}\Vert ^2.
\end{equation}
\end{theorem}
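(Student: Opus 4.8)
\emph{Proof plan.} The plan is to differentiate $H_V$ along the closed-loop trajectories, separating the explicit time dependence from the dependence through the agent positions,
\[ \frac{d}{dt}H_V(P,t) = \frac{\partial H_V}{\partial t} + \sum_i \left(\frac{\partial H_V}{\partial p_i}\right)^{\top}\!\dot p_i, \]
and to verify that the feed-forward and gain-correction terms in~\eqref{eq:full-controller} were constructed precisely to cancel $\partial H_V/\partial t$, leaving only the proportional term. For the position gradient I would invoke the classical identity from~\cite{cortes2004coverage_nonuniform}: since the integrand of~\eqref{eq:macoste} is continuous across each face $\partial V_{ij}$, the motion of the partitions contributes nothing and $\partial H_V/\partial p_i = \int_{V_i}(p_i-q)\phi(q,t)\,dq = m_i(p_i-c_i)$ by~\eqref{eq:masscentroid}. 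For $\partial H_V/\partial t$ the partitions are held fixed (their partial time derivative is zero, as noted after~\eqref{eq:partialmasscentroid}), so $\partial H_V/\partial t = \frac{1}{2}\sum_i\int_{V_i}\Vert q-p_i\Vert^2\,\frac{\partial\phi}{\partial t}(q,t)\,dq$.

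The crux is evaluating this last expression. I would substitute the transport identity~\eqref{eq:PDE}, exchange the finite sum over $k$ with the integral, and integrate by parts each term $\int_{V_i}\Vert q-p_i\Vert^2\,w_k^{\top}\nabla\phi_k\,dq$ by writing $w_k^{\top}\nabla\phi_k = \nabla\cdot(\phi_k w_k)$ (valid since $w_k$ is constant in space) and applying the divergence theorem on $V_i$ together with $\nabla\Vert q-p_i\Vert^2 = 2(q-p_i)$. This yields a boundary integral over $\partial V_i$ plus the volume term $-2\int_{V_i}(q-p_i)^{\top}w_k\,\phi_k\,dq = -2m_{ik}w_k^{\top}(c_{ik}-p_i)$ using~\eqref{eq:componentmasscenter}. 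Summing over $i$, the contributions of the interior faces $\partial V_{ij}$ cancel pairwise, because $\hat n_{ij} = -\hat n_{ji}$ by~\eqref{eq:normal} while $\Vert q-p_i\Vert = \Vert q-p_j\Vert$ on $\partial V_{ij}$ by~\eqref{eq:boundary}; only the integrals over $V_i\cap\partial\Omega$ remain, which are exactly the boundary term appearing in $F_i$.

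What is left is bookkeeping: with $\sum_k m_{ik}=m_i$ and $\sum_k m_{ik}c_{ik}=m_i c_i$ from~\eqref{eq:componentmasscenter} and the splitting $c_{ik}-p_i = (c_{ik}-c_i)-(p_i-c_i)$, the computation collapses to $\partial H_V/\partial t = -\frac{1}{2}\sum_i F_i - \sum_i\big(\sum_k m_{ik}w_k\big)^{\top}(p_i-c_i)$. Substituting~\eqref{eq:full-controller} into $\sum_i m_i(p_i-c_i)^{\top}\dot p_i$ and expanding, the averaged-velocity feed-forward term contributes $+\sum_i\big(\sum_k m_{ik}w_k\big)^{\top}(p_i-c_i)$ and the $F_i$ gain-correction contributes $+\frac{1}{2}\sum_i F_i$, so both precisely annihilate the matching terms in $\partial H_V/\partial t$ and one is left with $\frac{d}{dt}H_V = -\frac{\beta}{2}\sum_i m_i\Vert p_i-c_i\Vert^2\leq 0$, i.e.~\eqref{eq:dec_cond}. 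To conclude that Problem~\ref{prob:1} is solved I would then argue in the standard way: $H_V$ is bounded below (as $\Omega$ is bounded and $\phi$ is a bounded GMM) and nonincreasing, hence convergent, so $\int_0^\infty\sum_i m_i\Vert p_i-c_i\Vert^2\,dt<\infty$; a Barbalat/LaSalle-type argument — using boundedness of $\Omega$ and of the GMM parameters to secure uniform continuity of the integrand — then forces $\Vert p_i-c_i\Vert\to 0$, i.e.\ the configuration converges to the instantaneous centroidal Voronoi set, which is the critical-point condition of $H_V(\cdot,t)$.

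I expect the main obstacle to be the boundary-integral accounting in $\partial H_V/\partial t$: one must apply the divergence theorem with the correct outward normal on each polygon face, check that the interior faces genuinely cancel upon reassembling the sum over agents, and match the surviving terms to the definition of $F_i$ verbatim, including the normal $\hat n_i$ on $V_i\cap\partial\Omega$. A secondary point deserving a remark is the apparent singularity of the gain-correction term at $p_i=c_i$: there the right-hand side of~\eqref{eq:dec_cond} vanishes identically so the identity holds trivially, and away from that set the derivation above applies; one may restrict to the generic case $p_i\neq c_i$ or treat the controller's behaviour on that set separately.
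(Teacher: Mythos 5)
Your proposal is correct and follows essentially the same route as the paper's proof: differentiate $H_V$, cancel the moving-boundary (Leibniz) contributions across interior Voronoi faces, substitute the transport identity~\eqref{eq:PDE}, integrate by parts so that only the $\partial V_i\cap\partial\Omega$ boundary integrals survive, and verify that the feed-forward and $F_i$ terms of~\eqref{eq:full-controller} cancel the resulting expression, leaving~\eqref{eq:dec_cond}. Your added Barbalat/LaSalle step for the asymptotic claim and the remark on the $p_i=c_i$ singularity go slightly beyond what the paper writes down, but the core argument is the same.
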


    
    
 

\begin{proof}
Consider the total time derivative of~\eqref{eq:macoste}. 
\begin{equation}\label{eq:mscost-dot}
    \begin{split}
        &\frac{d}{dt} H_V(P,t) = \sum_i \bigg(\frac{1}{2}\int_{V_i}\Vert q - p_i\Vert ^2\frac{\partial{\phi}}{\partial t}(q,t)dq \\
        & \quad -\int_{V_i}  \dot{p}_i^{\top}(q-p_i)\phi(q,t) dq \\
        &\quad +\frac{1}{2}\int_{\partial V_i}\Vert q - p_i\Vert ^2\phi(q,t)\hat{n}_i^{\top}\frac{d (\partial V_{i})}{dt}dq \\
        &\quad -\sum_{j\in\mathcal{N}_i} \frac{1}{2}\int_{\partial V_{ij}}\Vert q - p_j\Vert ^2\phi(q,t)\hat{n}_{j}^{\top}\frac{d (\partial V_{ij})}{d t}dq\bigg).\\
        \end{split}
\end{equation}
The last two terms are due to the Leibniz integral rule and the dependency of partition $V_i$ and neighbouring partitions $\{V_{ij}\}_{j\in\mathcal{N}_i}$ on $p_i$ and therefore indirectly on time $t$. These two terms cancel each other out since over each boundary $\partial V_{ij}$, from definition~\eqref{eq:boundary}, we have $\Vert q - p_i\Vert = $ $\Vert q - p_j\Vert$ while $\hat{n}_{ij}=-\hat{n}_{ji}$ from~\eqref{eq:boundary2}. We have also used the fact that $\frac{\partial \Omega}{\partial t}=0$.

Next, substitute the PDE~\eqref{eq:PDE} into the first term and note that the second integral simplifies from  definitions~\eqref{eq:masscentroid}.
\begin{equation}
\begin{aligned}\label{eq:mscost-dot2}
        &\frac{dH_V}{dt} (P,t) =\sum_i \left(-\frac{1}{2}\int_{V_i}\Vert q - p_i\Vert ^2 \sum_k w_k^{\top}\nabla \phi_k(q,t)dq \right.\\
        &\quad +m_i(p_i - c_i)^{\top}\dot{p}_i\bigg).
    \end{aligned}
\end{equation}
Recall Green's first identity for integration by parts. Let $u:\mathbb{R}^2\longrightarrow \mathbb{R}^2$ and $f:\mathbb{R}^2\longrightarrow \mathbb{R}$, then
    \begin{equation}\label{eq:ibp}
    \begin{split}
    & \int_{V} u(q)^{\top} \nabla f(q)dq = \int_{\partial V} u(q)^{\top} \hat{n}f(q) dq\\&\quad\quad - \int_{V} (\nabla \cdot u(q)) f(q) dq,\\   
    \end{split}
\end{equation}
where $\hat{n}$ is the outward unit normal vector to the boundary $\partial V$ and $\nabla \cdot u(q)$ is the divergence of the vector field $u$.
Now apply integration by parts~\eqref{eq:ibp} to the first integral, considering $u(q) \leftarrow \Vert q - p_i\Vert ^2  w_k$ and $f(q) \leftarrow \phi_k(q,t)$ and noting that $\nabla\cdot( \Vert q - p_i\Vert ^2  w_k) = 2w_k^{\top}(q-p_i)$.  
\begin{equation}\label{eq:aux1}
    \begin{split}
    & \sum_i \big(-\frac{1}{2}\int_{V_i}\Vert q - p_i\Vert ^2 \sum_k w_k^{\top}\nabla \phi_k(q,t)dq\big) \\& \quad =\sum_i \bigg(- \frac{1}{2}\int_{\partial V_i}\Vert q - p_i\Vert ^2 \sum_k w_k^{\top}\hat{n}_i \phi_k(q,t)dq\\
    &\quad\quad\quad +\sum_k w_k^{\top}\int_{ V_i} (q - p_i)\phi_k(q,t)dq\bigg).
    \end{split}    
\end{equation}
Here the boundary integral over $\partial V_i$ entails a number of line integrals over $\partial V_{ij}$ for all $j\in\mathcal{N}_i$ and possibly\footnote{That is when the Voronoi region $V_i$ shares a boundary with the boundary of the overall coverage space $\partial \Omega$. } $\partial \Omega\cap \partial V_i$ with $\hat{n}_i$ specialising to outer normal vectors to neighbour boundaries $\hat{n}_{ij}$~\eqref{eq:normal} or to outer normal vectors to coverage space boundary $\Omega$, respectively. 

 The first integral in~\eqref{eq:aux1} simplifies to an integral over $\partial \Omega\cap \partial V_i$ (if non-empty) since the integrals over inter-agent boundaries cancel out. The latter is because by definition~\eqref{eq:boundary}, over inter-agent boundaries we have $\Vert q-p_i\Vert = \Vert q-p_j\Vert$ and that $\hat{n}_{ij}=-\hat{n}_{ji}$. The second integral in~\eqref{eq:aux1}, after adding and subtracting $c_i$ and using~\eqref{eq:componentmasscenter}, yields
\begin{equation}
\begin{split}
    &\sum_k w_k^{\top}\int_{ V_i} (q - p_i +c_i-c_i)\phi_k(q,t)dq \\
    &\quad = \sum_k m_{ik}w_k^{\top}(c_{ik}-c_{i})-(p_i-c_i)^{\top}\sum_k m_{ik}w_k.
    \end{split}
\end{equation}
Thus, we simplify~\eqref{eq:aux1} to
\begin{equation}\label{eq:mscost-dot3}
\begin{split}
&\frac{d}{dt} H_V(P,t) =\sum_i \bigg(-(p_i-c_{i})^{\top}\sum_k m_{ik}w_k \\&\quad\quad -\frac{1}{2}\sum_k \int_{\partial V_i\cap \partial \Omega}\Vert q - p_i\Vert ^2w_k^{\top}\hat{n}_j\phi_k(q,t)dq \\ &\quad \quad+  \sum_k m_{ik}w_k^{\top}(c_{ik}-c_{i}) +m_{i}(p_i - c_{i})^{\top}\dot{p}_i\bigg).
    \end{split}
\end{equation}

Substituting~\eqref{eq:dist_cont} and control law~\eqref{eq:full-controller} into~\eqref{eq:mscost-dot3}, the decrease condition~\eqref{eq:dec_cond} is reached.

\end{proof}


\vspace{-.5cm}
\section{Simulation and Field Trial}\label{sec:simsfiled}
In this section we first discuss some considerations on how to implement the proposed algorithm in Section~\ref{sec:implementation}. Section~\ref{sec:sims} describe a simulation study that involves comparing the proposed algorithm against the continuous Lloyd algorithm~\cite{cortes2004coverage_nonuniform} (serving as a baseline since it was not designed for dynamic coverage control) and its related dynamic coverage control extension introduced in~\cite{cortes2002coverage_dynamic}. Finally, we will describe in Section~\ref{sec:outdoor} an outdoor experiment conducted on a number of quadrotor aerial platforms tracking a simulated moving plume serving as the dynamic density function. 

\subsection{Implementation}\label{sec:implementation}
One practical aspect of the proposed control law is around robust implementation of terms that are divided by $\Vert p_i-c_i\Vert^2$. If $p_i=c_i$ this control law ceases to be well defined. To tackle this,  
when it is detected that $\Vert p_i-c_i\Vert\leq \epsilon$ with an arbitrarily small threshold $\epsilon>0$, we modify the controller by excluding $F_i$.
\begin{equation}\label{eq:switching}\small
   \begin{split}
       &\Vert p_i-c_i\Vert\leq \epsilon\;\Longrightarrow\\ &\quad f(p_i, \{p_j\}_{ j\in\mathcal{N}_i}) \longleftarrow \frac{\sum_k m_{ik}w_k}{m_{i}} -\frac{\beta}{2}(p_i-c_{i}).
   \end{split} 
\end{equation}
Note that when the proximity assumption of~\eqref{eq:switching} holds, the same cost decrease rate~\eqref{eq:dec_cond} can be shown. This can be seen from the analysis provided in the proof of Theorem~\ref{th:decrease}. Nevertheless, a formal analysis of the overall hybrid controller is beyond the scope of this paper.  

 Furthermore, we introduce a general clamping strategy that prevents the agents from exceeding their top speed limit $s_m\in\mathbb{R}^+$.
 \begin{equation}
 \dot{p}_i = \left\{\begin{matrix}
     f(p_i, \{p_j\}_{ j\in\mathcal{N}_i}), & \mbox{if }\Vert f(p_i, \{p_j\}_{ j\in\mathcal{N}_i})\Vert \leq s_m,\\
     \frac{s_mf(p_i, \{p_j\}_{ j\in\mathcal{N}_i})}{\Vert f(p_i, \{p_j\}_{ j\in\mathcal{N}_i})\Vert}, & \hspace{-2mm}\mbox{otherwise.}
 \end{matrix} \right.\vspace{-5mm}
 \end{equation}
\subsection{Simulation Study}\label{sec:sims}
In this section we consider a simulated plume coverage scenario in a bounded rectangular area of $100\times 200$m size (bottom left corner located at the origin). Consider five agents at initial x-y coordinates at columns of 
\begin{equation}\small
P=\begin{bmatrix}
    5 & 5 & 5 & 5 & 5\\
    5 & 25 & 45 & 65 & 85
\end{bmatrix}.    
\end{equation}
All agents have a top speed of $s_m=3.5\; (m/s)$ and a uniform control gain of $\beta=0.05$.
The plume is a five-component GMM with the following parameters. The GMM sources, arranged as columns of $S_l\in\mathbb{R}^{2\times 5}$, move linearly between six different configurations  ($l\in\{0,\cdots,5\}$), 
\begin{equation}\small
\begin{split}
    &S_0=\begin{bmatrix}
        30 & 55 & 85 & 100 &110\\
        15&25&55&22&35
    \end{bmatrix},\\
    &S_1=\begin{bmatrix}
        50 & 65 & 90 & 110 &120\\
        20&35&65&42&45
    \end{bmatrix},\\
    &S_2=\begin{bmatrix}
        70 & 95 & 105 & 110 &130\\
        40&45&65&62&55
    \end{bmatrix},\\
    &S_3=\begin{bmatrix}
        90 & 115 & 125 & 130 &145\\
        20&55&75&62&65
    \end{bmatrix},\\
    &S_4=\begin{bmatrix}
        110 & 130 & 135 & 140 &150\\
        30&60&80&85&75
    \end{bmatrix},\\
    &S_5=\begin{bmatrix}
        140 & 145 & 150 & 150 &175\\
        35&70&92&95&60
    \end{bmatrix}.\\
\end{split}
\end{equation}
\begin{figure*}[h!]
\centering
\includegraphics[width=.9\linewidth]{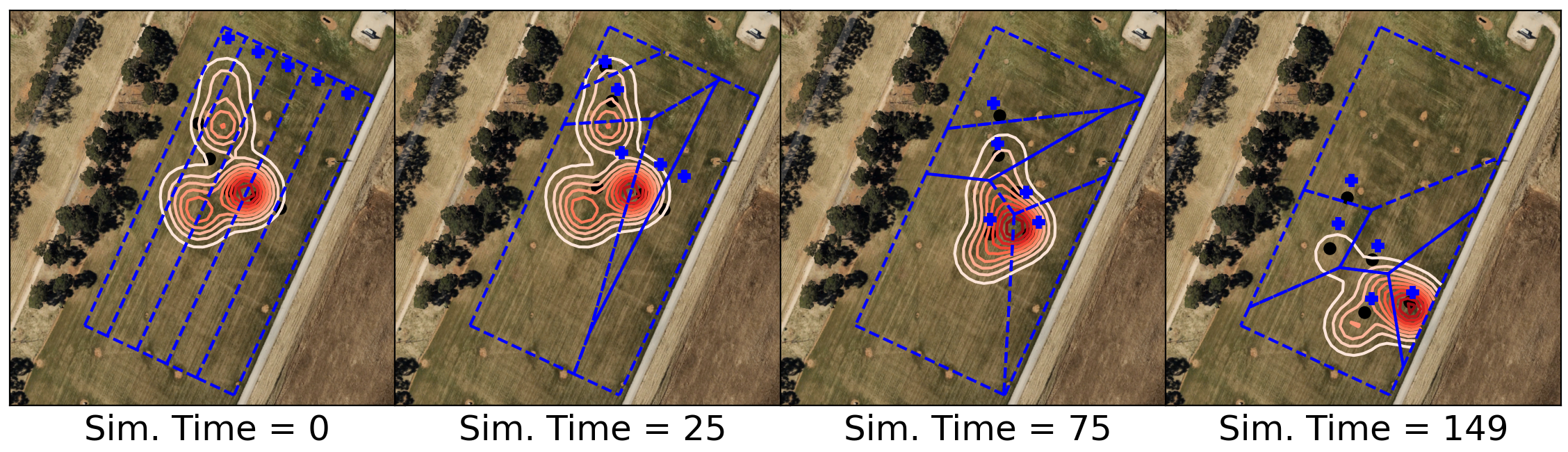}\\
\caption{Configuration of the plume density and agents during the simulation. Dashed blue lines represent the Voronoi tessellation. Blue crosses represent the agents. Contour lines of the GMM density are shown with shades of red. Black dots represent the centroids of the Voronoi regions. The second frame is at the simulation time when the GMM density starts moving.}
    \label{fig:t1}
\end{figure*}
  We let $\sigma_k=15, a_k=100$ for each Gaussian component $k$.  Figure~\ref{fig:t1} depicts the GMM configuration and its evolution within the coverage region considered (which is based on our outdoor experiment area in Section~\ref{sec:outdoor}). Note that the GMM moves significantly faster than the top speed of the agents. This is a deliberate choice designed to challenge the control methods against a fast moving density. If agent speeds are unlimited or if aggressive centroid tracking is used with a high proportional gain $\beta$, then the performance of these methods will be nearly identical. In practice, high control gains are undesirable as they lead to jerky maneuvers and energy waste.  
  
   We compare the proposed control law~\eqref{eq:full-controller}, dubbed `GMM Controller' against the continuous Lloyd algorithm~\cite{cortes2004coverage_nonuniform}, dubbed `Lloyd', and its dynamic coverage control extension introduced in~\cite{cortes2002coverage_dynamic}, dubbed `Dynamic Lloyd'. Note that these labels are for the sake of this study and do not represent official names.
  
\begin{figure}[h]
    \centering   \includegraphics[width=.9\linewidth]{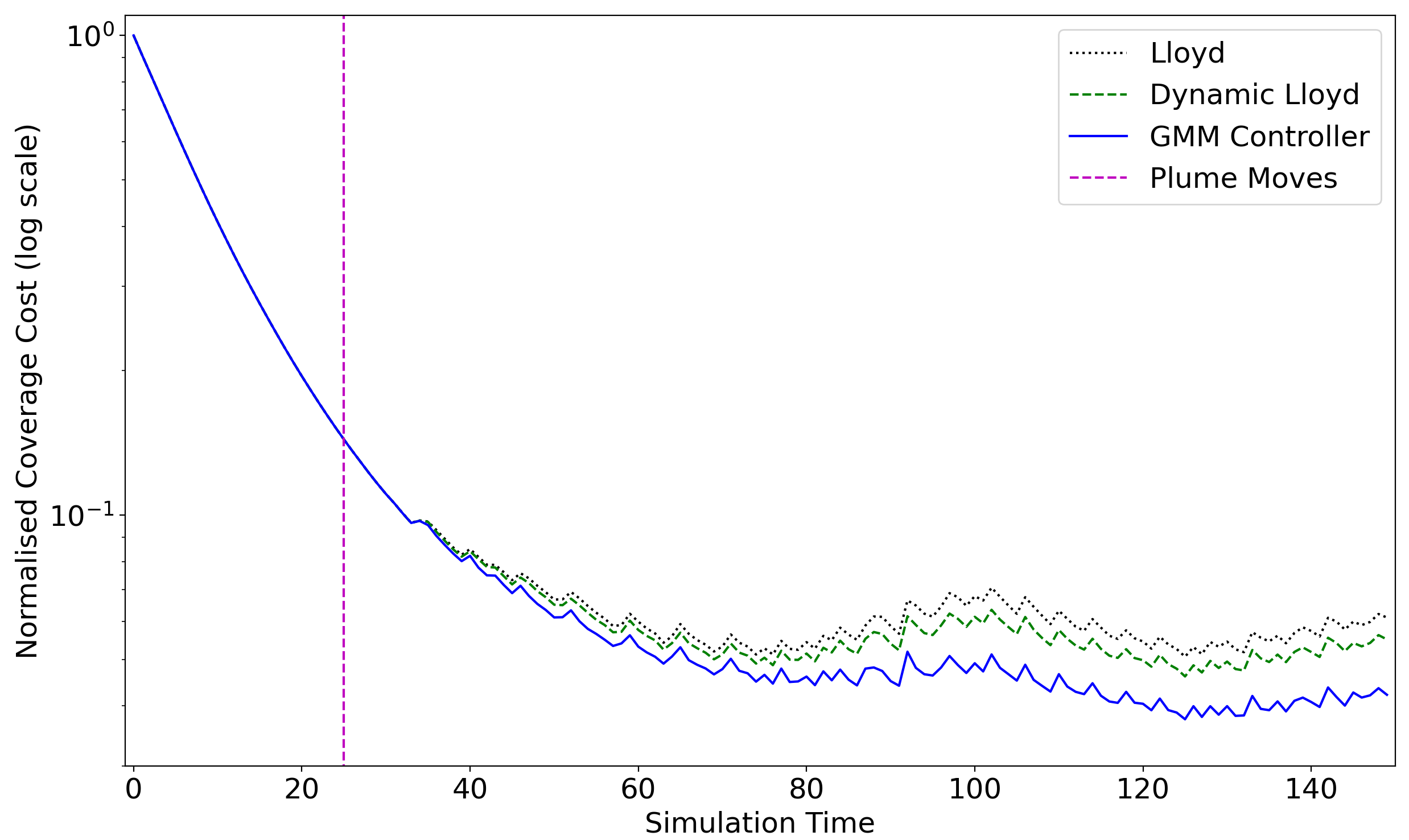}
    \caption{Coverage cost $H(V,P)$~\eqref{eq:macost} of methods vs simulation time. Note that the coverage cost values (on the logarithmic y-axis) are normalised with respect to the coverage cost at the start of the simulation.}
    \label{fig:CovCost}
\end{figure}
 Figure~\ref{fig:CovCost} illustrates the coverage cost~\eqref{eq:macost} of each method during this scenario.  As can be seen, $H(P,t)$ is minimised as expected by Theorem~\ref{th:decrease}. The proposed GMM Controller outperforms the rest of the methods considered, as it fully incorporates the time-varying dynamics of the GMM density into its control action. When the plume starts moving (as marked by the vertical line in Figure~\ref{fig:t1}) the proposed controller outperforms the dynamic controller~\eqref{eq:dynamic-cortes}, which itself outperforms the non-dynamic Lloyd algorithm. Before this mark, the performances of all methods are dominated by their centroid tracking action. 

\subsection{Outdoor Experiments}\label{sec:outdoor}
Now we describe a similar experiment to that of Section~\ref{sec:sims}, this time performed on real hardware in an outdoor experiment to further verify our results. Figure~\ref{fig:t1} presents a satellite view of our outdoor experiment site. The proposed controller~\eqref{eq:full-controller} was implemented on a NVIDIA Jetson Nano acting as a companion computer onboard of four MR4 drones (see Figure~\ref{fig:bask-silvus}). The companion computer received the GPS position of its own drone from the Ardupilot Cube Orange autopilot module of MR4 and position of neighbouring drones through the SL4200 SilvusTechnologies MIMO radio network (see Figure~\ref{fig:bask-silvus}). The Silvus radios are able to establish a mesh network, though there are no restrictions on the network topology for the experiment. The parameters of the experiment, including number of agents and close operating proximity, results in a fully-connected network topology. The network inputs were received at a frequency of $10$ Hz. The full trajectory of the dynamic density function, as described in the previous section, is uploaded to each drone offline. During the experiment, the companion computer executes~\eqref{eq:dist_cont} and~\eqref{eq:full-controller}, with the adjustments discussed in Section~\ref{sec:implementation},  incorporating a control barrier function algorithm~\cite{ames2016control} for collision safety. It then issues waypoint commands at $10$ Hz to the Ardupilot Cube Orange autopilot module. Figure~\ref{fig:outdoor} shows the coverage cost reduction achieved during the experiment. As shown in Figure~\ref{fig:outdoor}, the performance closely matches the simulation results in Figure~\ref{fig:CovCost}.      
\begin{figure}[h]
    \centering
    \includegraphics[width=0.3\linewidth]{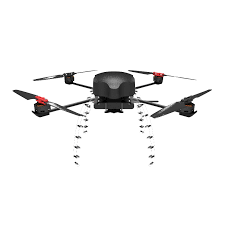}
    \includegraphics[width=0.3\linewidth]{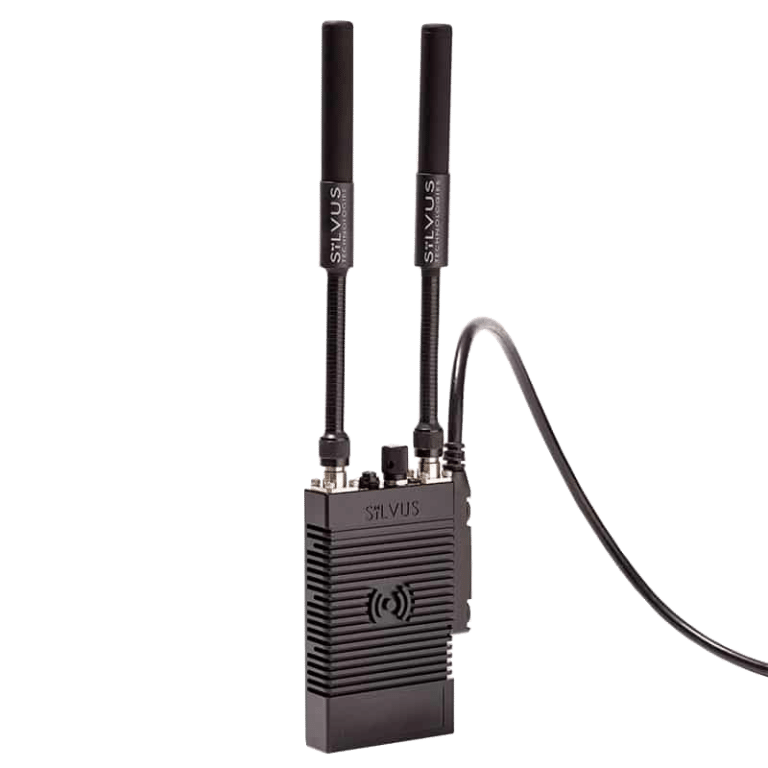}
    \caption{AeroDrone MR4 drone from BaskAerospace (left picture) equipped with SL4200 MIMO radio from SilvusTechnologies (right picture).}\vspace{-.5cm}
    \label{fig:bask-silvus}
\end{figure}
\begin{figure}[h]
    \centering
    \includegraphics[width=.8\linewidth]{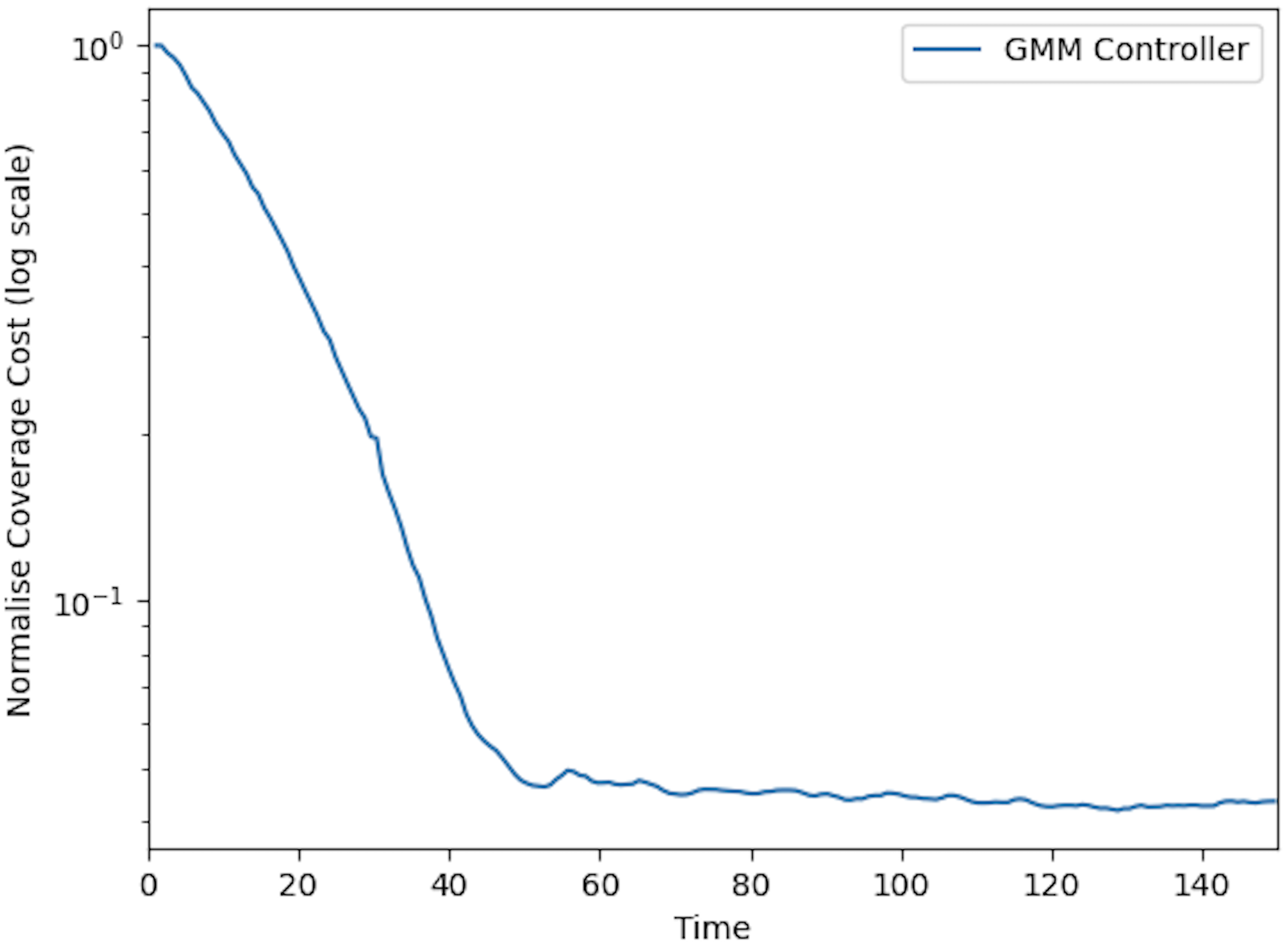}
    \caption{Coverage Cost of the proposed GMM controller during the outdoor experiment with four MR4 drones.\vspace{-1cm}}
    \label{fig:outdoor}
\end{figure}
\section{Conclusion}\label{sec:conc}
We have developed a coverage control algorithm for applications involving dynamic density functions modeled as Gaussian mixtures with known Gaussian source kinematics. This GMM structure allows us to fully incorporate and efficiently calculate the control inputs dependent on the time evolution of the density. We prove that the proposed control law generates agent trajectories along which the collective coverage cost is minimised. The computational efficiency makes the algorithm particularly attractive for practical scenarios, as demonstrated in our field trial.  

\bibliographystyle{IEEEtran}
\bibliography{ref}
\end{document}